\documentclass[11pt,a4paper]{article}
\usepackage[utf8]{inputenc}
\usepackage[margin=1in]{geometry}
\usepackage{amsmath,amssymb,amsthm}
\usepackage{hyperref}
\usepackage{natbib}
\usepackage{booktabs}
\usepackage{graphicx}
\usepackage{tikz}
\usetikzlibrary{arrows.meta, positioning}

\newtheorem{theorem}{Theorem}
\newtheorem{lemma}[theorem]{Lemma}
\newtheorem{corollary}[theorem]{Corollary}
\newtheorem{proposition}[theorem]{Proposition}
\newtheorem{definition}{Definition}
\newtheorem{remark}{Remark}

\newtheorem{assumption}{Assumption}

\title{On the Inseparability of Instructions and Data\\in Shared-Embedding Sequence Models}

\author{Dewank Pant\\
Independent Researcher\\
\texttt{dewankpant@gmail.com}
\and
Shruti Lohani\\
Independent Researcher\\
\texttt{shrutilohani9@gmail.com}
\and
Avijit Kumar\\
Independent Researcher\\
\texttt{avijitkumar2002@gmail.com}}
\date{June 2026}

\begin{document}
\maketitle

\begin{abstract}
Prompt injection is widely treated as a defect to be patched through better training,
filtering, or guardrails. We show that, for a broad class of prompted neural architectures,
perfect prompt-injection prevention cannot be guaranteed within the shared representational
pipeline itself. We formalize prompted systems as \emph{Prompted Action Models} whose outputs
include not only token generation but also control-authoritative actions such as refusal decisions, tool authorization, policy routing, and memory-write operations. We then define
a security property, \emph{Semantic-Faithful Control} (SFC), requiring that control-authoritative
behavior depend only on the semantic content of untrusted input, not on semantically irrelevant variations in its representational encoding.

Our analysis has three parts. First, we derive a provenance-recovery impossibility result:
when trusted instructions and untrusted content are processed through shared representations, the Bayes-optimal error of provenance recovery is governed by the total variation
distance between the corresponding representation distributions, and is nonzero whenever
those distributions are not disjoint. Second, we show that in standard shared-attention architectures, untrusted content enters control-relevant computation through the same value-aggregation pathway used to determine control-authoritative outputs. Third, combining
these facts with a finite-coverage invariance argument, we prove that no mechanism operating
solely within the shared representational pipeline can guarantee perfect Semantic-Faithful
Control.

Our result does not imply that useful systems must ignore user input, nor that all defenses are equally ineffective in practice. Rather, it establishes a sharp limit: within shared-embedding architectures lacking enforced separation between trusted control and untrusted
content, perfect prompt-injection prevention is impossible. This result is structurally analogous to the code-data confusion in Von Neumann machines that gives rise to buffer overflow
vulnerabilities. That problem required a decades-long layered defense combining architectural mitigations (DEP, W$\oplus$X), runtime protections (ASLR, stack canaries), and ultimately
memory-safe languages, because no single mechanism was sufficient. The implication is architectural: complete protection requires mechanisms that enforce control separation, rather
than solely improving in-pipeline classification or alignment.
\end{abstract}

\noindent\textbf{Keywords:} Prompt Injection, Impossibility Theorem, Semantic-Faithful Control,
Noninterference, AI Safety, Von Neumann Architecture

\section{Introduction}
\label{sec:intro}

The integration of Large Language Models (LLMs) into applications that process untrusted user
input has created a class of vulnerabilities collectively known as \textit{prompt
injection}~\citep{greshake2023indirect, liu2024formalizing}. In a prompt injection attack, an
adversary crafts user data that, when processed by the model alongside system instructions,
causes the model to deviate from its intended behavior, executing the attacker's instructions
instead of (or in addition to) the system's.

The research community has produced a proliferation of defenses: instruction tuning, RLHF,
guardrail classifiers, perplexity filtering, sandwich defenses, XML tagging, representation
engineering, and circuit breakers~\citep{zou2024circuit}. Yet no proposed defense has proven
robust; many are broken within weeks of publication. Multi-turn attacks that decompose harmful
queries across conversation turns reach success rates of 90--99\% on open-weight
models~\citep{jailbreak2026survey}. Even circuit breakers, the current state-of-the-art
representation-level defense, fall to 54.2\% ASR under Crescendo
attacks~\citep{crescendo2025repe}. Universal adversarial suffixes~\citep{zou2023universal}
demonstrate that semantically negligible token strings appended to harmful requests can
systematically flip refusal behavior across multiple aligned models.

This paper asks a more fundamental question: \textbf{Is perfect prompt injection prevention
even theoretically possible within the current architecture?}

We prove the answer is \textbf{no}. Specifically, we prove that in shared-embedding architectures
lacking enforced control-data separation, the shared representational space in which transformers
process both instructions and data makes perfect prevention of prompt injection mathematically
impossible. This is not a statement about the inadequacy of current defenses. It is a statement
about the architecture itself.

\paragraph{Historical Precedent: The Von Neumann Analogy.} Our result has a precise historical
parallel. The Von Neumann architecture stores code and data in shared memory. This design
choice, elegant and powerful, is also the root cause of buffer overflow vulnerabilities: an
attacker can craft data that, when written to memory, is \textit{interpreted as code}. This
vulnerability class was discovered in the 1970s~\citep{anderson1972planning} and required
roughly three decades of layered defenses to meaningfully contain: source-level bounds checking
and safer library functions, stack canaries (StackGuard, 1998), Address Space Layout
Randomization (ASLR), the hardware No-Execute bit (NX/DEP, AMD 2003) enforcing
Write-XOR-Execute policies, control-flow integrity, and ultimately the adoption of memory-safe
languages (Rust, Go, Java) that eliminate the bug class at the language level. Each mechanism
closed specific exploitation patterns, but the full defense posture required architectural,
runtime, and language-level interventions in combination.

We prove that the transformer architecture has an analogous structural vulnerability. The
embedding function maps both instructions and data into the same vector space. Once embedded,
their types are erased. Just as Von Neumann machines cannot distinguish code from data in
shared memory without external enforcement, transformers cannot guarantee that
control-authoritative behavior depends only on semantic content rather than on adversarial
representational manipulation, absent architectural separation.

\paragraph{Contributions.} We make the following contributions:
\begin{enumerate}
    \item We formalize \textit{Semantic-Faithful Control} (SFC), a security property requiring
    that control-authoritative behavior depend on the semantic content of untrusted input, not on
    semantically irrelevant variations in encoding. We show that SFC is the minimal security
    property for agentic systems with tool use, memory, and policy routing.

    \item We prove the \textit{Inseparability Theorem}: in any shared-embedding architecture
    with exposed control paths and without immutable provenance enforcement, perfect SFC cannot
    be guaranteed within the shared representational pipeline. The proof combines a
    provenance-recovery impossibility (via total variation distance), a control-path exposure
    result (untrusted values enter control computation via attention), and a finite-coverage
    invariance argument (finite training cannot certify invariance over infinite encoding
    equivalence classes).

    \item We formalize the structural isomorphism between prompt injection and buffer overflows,
    grounding the analogy in a precise correspondence between architectural properties.
\end{enumerate}

\section{Formal Framework}
\label{sec:framework}

\subsection{The Prompted Action Model}

\begin{definition}[Prompted Action Model]
\label{def:pam}
A \textbf{Prompted Action Model} is a tuple
$\mathcal{M} = (\Sigma, d, E, f, \mathcal{A})$
where:
\begin{itemize}
    \item $\Sigma$ is a finite token vocabulary with $|\Sigma| = V$.
    \item $d \in \mathbb{N}$ is the embedding dimension.
    \item $E : \Sigma \to \mathbb{R}^d$ is a \textbf{shared embedding function}, extended to
    sequences as $E : \Sigma^* \to (\mathbb{R}^d)^*$.
    \item $\mathcal{A}$ is a finite or countable \textbf{action space} comprising all externally
    observable model actions, including token generation, tool invocation, policy routing, and
    memory-write operations.
    \item $f : (\mathbb{R}^d)^* \to \Delta(\mathcal{A})$ is the model function mapping a
    sequence of embeddings to a probability distribution over actions in $\mathcal{A}$.
\end{itemize}
\end{definition}

\begin{definition}[Prompted Input with Provenance]
\label{def:prompted-input}
A \textbf{prompted input} is a concatenation $x = (s \| u)$ where:
\begin{itemize}
    \item $s \in \Sigma^*$ is the \textbf{trusted instruction prefix} (specified by the application
    developer).
    \item $u \in \Sigma^*$ is the \textbf{untrusted user content} (specified by the user or
    adversary).
    \item $\|$ denotes concatenation in the token sequence.
\end{itemize}
Each position in $x$ carries a provenance variable $R_j \in
\{\texttt{trusted}, \texttt{untrusted}\}$. The model processes the combined embedding
sequence $E(s \| u) = (E(x_1), \ldots, E(x_{|x|}))$.
\end{definition}

\begin{remark}[The Shared Embedding Axiom]
\label{rem:shared}
The critical architectural property: both $s$ and $u$ are mapped through the \textbf{same}
embedding function $E$. After embedding, there is no type tag, metadata, or immutable
architectural feature that distinguishes a trusted embedding from an untrusted embedding. They are
vectors in the same space $\mathbb{R}^d$.
\end{remark}

\subsection{The Security Property: Semantic-Faithful Control}

\begin{definition}[Semantic-Faithful Control]
\label{def:sfc}
Let $Y(x) \in \Delta(\mathcal{A})$ denote the model's action distribution on prompted input
$x = (s \| u)$, and let
\[
\Pi_{\mathrm{ctrl}} : \Delta(\mathcal{A}) \to \Delta(\mathcal{A}_{\mathrm{ctrl}})
\]
denote projection onto the \textbf{control-authoritative outputs}, where
$\mathcal{A}_{\mathrm{ctrl}} \subseteq \mathcal{A}$ comprises refusal decisions,
tool-authorization decisions, policy-routing decisions, and memory-write authorization decisions.

Let $\equiv_{\mathrm{sem}}$ be a \textbf{semantic equivalence relation} on untrusted inputs,
where $u \equiv_{\mathrm{sem}} u'$ means that $u$ and $u'$ express the same task-relevant
semantic content for the application under consideration.

A prompted action model satisfies \textbf{Semantic-Faithful Control (SFC)} if, for all trusted
prefixes $s \in \Sigma^*$ and all untrusted inputs $u, u' \in \Sigma^*$,
\begin{equation}
    u \equiv_{\mathrm{sem}} u'
    \quad \Longrightarrow \quad
    \Pi_{\mathrm{ctrl}} Y(s \| u) = \Pi_{\mathrm{ctrl}} Y(s \| u').
    \label{eq:sfc}
\end{equation}
That is, control-authoritative behavior may depend on the semantic content of the untrusted
input (e.g., refusing harmful requests), but not on semantically irrelevant variations in its
representation or encoding.
\end{definition}

\begin{remark}[Parametricity of $\equiv_{\mathrm{sem}}$]
\label{rem:sem-parametric}
The theorem is parametric in the choice of $\equiv_{\mathrm{sem}}$. For any nontrivial semantic
equivalence (one that identifies at least two distinct token sequences as equivalent), the
impossibility holds. We instantiate $\equiv_{\mathrm{sem}}$ conservatively as \textit{encoding
equivalence}: $u \equiv_{\mathrm{sem}} u'$ if $u$ and $u'$ differ only in
semantics-preserving encoding transformations such as character-level obfuscation, Unicode
homoglyph substitution, base64 encoding, or the addition of semantically vacuous token strings.
Any broader equivalence only strengthens the result.
\end{remark}

\begin{remark}[Constructive SFC Violation via Universal Adversarial Suffixes]
\label{rem:gcg-sfc}
The universal adversarial suffixes of \citet{zou2023universal} provide a concrete, published
demonstration of an SFC violation. Appending a GCG-optimized suffix does not add task-relevant
semantic content to the underlying request under any reasonable $\equiv_{\mathrm{sem}}$, yet it
can flip the projected control-authoritative behavior $\Pi_{\mathrm{ctrl}} Y$ from refusal to
compliance on real aligned models. Thus, GCG constitutes a constructive violation of
Semantic-Faithful Control: semantically equivalent inputs induce different control-authoritative
outputs solely through representational manipulation within the shared pipeline.
\end{remark}

\begin{remark}[What SFC Does and Does Not Claim]
\label{rem:sfc-scope}
SFC does \textit{not} require that the model ignore user input. It permits
control-authoritative behavior to depend on the semantic content of the request. For example,
the model may refuse harmful queries while complying with benign ones. What it prohibits is
control-authoritative behavior depending on semantically irrelevant representational variation.
The distinction is between \textit{what} the user asks (semantic content, which may legitimately
affect refusal) and \textit{how} the request is encoded (representational form, which should not).
\end{remark}

\paragraph{Operational form.} Let $C$ denote the set of \emph{control-relevant nodes}: positions
and layers whose hidden states causally influence $\Pi_{\mathrm{ctrl}} Y(x)$. In a standard
causal decoder, these include the generation-position hidden states at layers that feed into the
output logits; in agentic systems, they additionally include states feeding into tool-invocation
heads, routing logits, and memory-write authorization heads. For each $c \in C$ at layer $\ell$,
the attention value aggregation is~\citep{vaswani2017attention}:
\begin{equation}
    \widetilde{h}_c^{(\ell)}(x)
    =
    \sum_{j=1}^{|x|} \alpha_{cj}^{(\ell)} W_V^{(\ell)} h_j^{(\ell-1)}(x)
    =
    \underbrace{\sum_{j \in T} \alpha_{cj}^{(\ell)} W_V^{(\ell)} h_j^{(\ell-1)}}_{\text{trusted contribution}}
    +
    \underbrace{\sum_{j \in U} \alpha_{cj}^{(\ell)} W_V^{(\ell)} h_j^{(\ell-1)}}_{\text{untrusted contribution}}
    \label{eq:attn-split}
\end{equation}
where $T$ and $U$ are the trusted and untrusted position sets.

Let $\Gamma_c : \mathbb{R}^d \to \mathbb{R}^{d_c}$ denote the \textbf{control readout map} at
node $c$: the composed function from the hidden state $\widetilde{h}_c^{(\ell)}$ through all
subsequent layers to the representation consumed by the control-authoritative output heads. When
we write $\Pi_{\mathrm{ctrl}}$ applied to a hidden-state difference in the bridge proposition
(Section~\ref{sec:bridge}), we mean $\Gamma_c$ (the hidden-state-level projection) rather than
the behavioral projection $\Pi_{\mathrm{ctrl}} : \Delta(\mathcal{A}) \to
\Delta(\mathcal{A}_{\mathrm{ctrl}})$ defined above.

\section{Threat Model and Scope}
\label{sec:scope}

We consider prompted action model architectures that satisfy all three of the following properties.

\paragraph{(1) Shared-space processing.}
Trusted instructions and untrusted content are processed through a common representational
pipeline. All positions are embedded into a shared hidden space and participate in the same
attention-residual computation. No immutable provenance barrier prevents untrusted representations
from entering the same forward computation as trusted representations.

\paragraph{(2) No immutable provenance enforcement.}
The architecture does not provide a non-user-writable enforcement mechanism that guarantees
trusted and untrusted information remain behaviorally separated. Positional encodings, soft segment
markers, and prompt delimiters do not provide such enforcement, since they are either location
signals or forgeable tokens within the same symbolic channel (Section~\ref{sec:positional}).

\paragraph{(3) Control-authoritative outputs from the shared pipeline.}
Observable control-authoritative outputs $\Pi_{\mathrm{ctrl}} Y(x)$, including refusal
decisions, tool authorization, policy routing, and memory-write authorization, are computed from
control-relevant nodes $C$ whose hidden states arise from the shared attention pipeline.

\paragraph{Architectures outside scope.}
The theorem does \textit{not} apply to architectures that enforce behavioral separation by
construction, including:
\begin{enumerate}
    \item \textbf{Disjoint trusted/untrusted encoders}, where trusted instructions and untrusted
    content are processed by separate representation pipelines and only interact through explicitly
    constrained interfaces.
    \item \textbf{Typed attention or hard segment masks}, where attention from untrusted positions
    into control-relevant trusted computation is structurally forbidden or immutably typed.
    \item \textbf{External policy engines}, where tool authorization, policy routing, or memory
    writes are decided by a mechanism external to the transformer and not directly controlled by
    shared hidden states.
\end{enumerate}
These architectures escape the theorem precisely because they enforce the separation property
that shared-space transformers lack.

\paragraph{Attack classes captured by SFC.}
SFC directly models the following real-world attack classes:
\begin{enumerate}
    \item \textbf{Indirect prompt injection $\rightarrow$ refusal bypass.} Untrusted content
    embedded in retrieved text alters refusal behavior while leaving the nominal task semantics
    unchanged.
    \item \textbf{Retrieved content $\rightarrow$ unauthorized tool invocation.} Untrusted
    external text changes tool-authorization behavior, causing the model to invoke actions outside
    the intended policy.
    \item \textbf{Injected content $\rightarrow$ memory or policy write.} Untrusted content
    modifies memory-write authorization or policy-routing behavior, persisting or escalating
    control influence across turns.
    \item \textbf{Universal adversarial suffixes $\rightarrow$ alignment bypass.}
    Semantically vacuous token strings appended to harmful requests flip control-authoritative
    behavior from refusal to compliance~\citep{zou2023universal}, constituting a direct SFC
    violation.
\end{enumerate}

\paragraph{Necessity of SFC.}
For agentic systems with tool use, memory, routing, or policy enforcement, SFC is a minimal
necessary security property. A weaker requirement would permit untrusted content to alter
authorization decisions through encoding manipulation while still claiming the model is
functioning as intended. A stronger requirement, forbidding any influence of untrusted inputs on
model behavior, would rule out ordinary question answering. SFC permits semantic dependence
through the authorized content channel, but forbids untrusted control over privileged behavior
through representational manipulation.

\section{The Inseparability Theorem}
\label{sec:theorem}

\subsection{Genericity Assumptions}

The following mild assumptions are used by the bridge proposition (Section~\ref{sec:bridge}) to
connect representation-level violations to behavioral change. They are not required for the main
impossibility result.

\begin{assumption}[Generic Control-Path Expressivity]
\label{ass:generic-wv}
For the control-relevant nodes under consideration, the composed linear maps from untrusted value
perturbations into the control-relevant hidden state are nondegenerate. In particular, the induced
map from admissible untrusted perturbations to the control-projected perturbation space is not
identically zero and is not specially aligned so as to annihilate the entire adversarial
perturbation family.
\end{assumption}

\begin{assumption}[Local Control Readout Nondegeneracy]
\label{ass:local-readout}
At the control-relevant states considered in the analysis, the downstream map from the
control-relevant hidden representation to the control-authoritative output distribution is locally
nonconstant on the admissible perturbation directions. Equivalently, the Jacobian of the composed
readout map has nonzero action on those perturbation directions almost everywhere in the region
of interest.
\end{assumption}

\begin{assumption}[Contextual Representation Non-Disjointness]
\label{ass:non-disjoint}
The conditional distributions of contextual representations at control-relevant nodes,
$P_{Z \mid R = \texttt{trusted}}$ and $P_{Z \mid R = \texttt{untrusted}}$, are not disjoint.
That is, the shared attention-residual computation does not map trusted and untrusted inputs to
completely separated regions of activation space.
\end{assumption}

\begin{remark}
Assumptions~\ref{ass:generic-wv} and~\ref{ass:local-readout} are standard genericity conditions.
They rule out pathological parameter settings in which the learned value projections or downstream
nonlinear layers collapse all relevant untrusted perturbations into an exact null space. The results
do not require random weights, only that the learned maps be nondegenerate on the perturbation
family of interest. Assumption~\ref{ass:non-disjoint} states that the shared pipeline does not
accidentally produce perfectly separated representations for trusted and untrusted content. Such
a condition would render the model unable to process user input in the same representational
space as instructions, contradicting the shared-embedding axiom in practice. All three
assumptions are consistent with the empirical success of universal adversarial
suffixes~\citep{zou2023universal}, which demonstrate that adversarial untrusted inputs produce
representations that influence control-relevant computation in all major aligned models tested
to date, suggesting these assumptions are not pathological in practice.
\end{remark}

\subsection{Representational Collision}

\begin{lemma}[Representational Collision]
\label{lem:collision}
Let $E: \Sigma \to \mathbb{R}^d$ be a shared embedding function. Let
$\Sigma_I \subseteq \Sigma$ be the set of tokens appearing in typical instructions, and
$\Sigma_D \subseteq \Sigma$ be the set of tokens appearing in typical user data. Then the
embedding sets overlap:
\begin{equation}
    \{E(t) : t \in \Sigma_I\} \cap \{E(t) : t \in \Sigma_D\} \neq \emptyset.
\end{equation}
\end{lemma}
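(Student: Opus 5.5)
The plan is to reduce the overlap claim to a statement about the underlying token sets, since $E$ is a function. If some token $t^\ast$ lies in both $\Sigma_I$ and $\Sigma_D$, then $E(t^\ast)$ is, by definition, an element of $\{E(t): t\in\Sigma_I\}$ and of $\{E(t): t\in\Sigma_D\}$, so the intersection is nonempty. This uses nothing about $E$ beyond its being a single shared map (Remark~\ref{rem:shared}): the same token cannot be sent to two different vectors depending on provenance, because Definition~\ref{def:prompted-input} feeds $E(s\|u)$ positionwise through one $E$ with no provenance argument.

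The whole argument therefore rests on showing $\Sigma_I\cap\Sigma_D\neq\emptyset$. The notion of ``typical'' is informal, so I will fix it. I will take $\Sigma_I$ (resp.\ $\Sigma_D$) to be the set of tokens occurring in some admissible trusted prefix $s$ (resp.\ untrusted input $u$). Since Definition~\ref{def:sfc} quantifies over all $u\in\Sigma^*$, the untrusted channel can contain any string. In particular, the adversary can copy any trusted prefix verbatim into $u$. Hence every token of any $s$ also appears in some $u$, so $\Sigma_I\subseteq\Sigma_D$. As long as at least one nonempty trusted prefix exists, this gives $\Sigma_I\cap\Sigma_D=\Sigma_I\neq\emptyset$.

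For readers who want the statement under a narrower, distributional reading of ``typical'' (tokens with positive probability under natural instruction and data corpora), I will add a second route. Both distributions range over the same finite vocabulary $\Sigma$. High-frequency function tokens (articles, whitespace, punctuation, common verbs such as ``ignore'' or ``write'') have positive probability in both corpora. Exhibiting a single such shared token is enough.

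The main obstacle is definitional rather than technical. The lemma is only as strong as the meaning given to $\Sigma_I$ and $\Sigma_D$, and under the distributional reading it is an empirical premise, not a theorem. I will therefore state the adversarial reading explicitly as the one used downstream, where it follows from unrestricted $u\in\Sigma^*$. I will also note the stronger conclusion it gives: $E(\Sigma_I)\subseteq E(\Sigma_D)$, so every trusted embedding vector is reproducible from the untrusted channel. This is the form that feeds the later non-disjointness and total-variation arguments.
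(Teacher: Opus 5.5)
Your proof is correct, and its core step is the paper's: because one map $E$ embeds both $s$ and $u$, any token in $\Sigma_I\cap\Sigma_D$ gives a common vector, so everything reduces to showing $\Sigma_I\cap\Sigma_D\neq\emptyset$.

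The paper establishes that intersection only by your second route. It points to common words (``the,'' ``you,'' ``write'') shared by natural instructions and data. This is an empirical premise, later backed by the overlap measurements in Table~\ref{tab:vocab-overlap}.

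Your primary route is genuinely different. You read $\Sigma_D$ as the tokens of any admissible $u\in\Sigma^*$, and derive $\Sigma_I\subseteq\Sigma_D$ (indeed $\Sigma_D=\Sigma$) from the threat model itself. This turns the lemma into an actual theorem and gives the stronger inclusion $E(\Sigma_I)\subseteq E(\Sigma_D)$.

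The price is that the statement becomes nearly tautological and no longer says anything about \emph{typical} traffic. That matters for how the lemma is used. Theorem~\ref{thm:prov-imposs} concerns fixed conditional distributions $P_{Z\mid R}$. There, $\mathrm{TV}<1$ requires some token with positive probability under both conditionals. The mere existence of an admissible $u$ containing a token $t$ does not give the untrusted distribution positive mass on $t$.

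So your adversarial reading naturally supports the adversarial bound of Theorem~\ref{thm:quantitative}, where the adversary chooses $\mathcal{D}$. The non-adversarial Bayes-error claim instead needs the distributional reading, which is the paper's route. This runs against your remark that the adversarial form is the one that feeds the downstream total-variation argument.
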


\begin{proof}
Natural language instructions and data share a common vocabulary. Tokens such as ``the,'' ``is,''
``you,'' ``should,'' ``write,'' ``help,'' and ``answer'' appear in both system instructions and
user data. Since $E$ maps each token to exactly one vector, the same token has the same embedding
regardless of whether it appears in $s$ or $u$. Therefore the embedding sets overlap on at least
$\{E(t) : t \in \Sigma_I \cap \Sigma_D\}$, which is nonempty.
\end{proof}

\subsection{Representational Non-Faithfulness}

\begin{lemma}[Representational Non-Faithfulness]
\label{lem:non-faithful}
For any nontrivial semantic equivalence $\equiv_{\mathrm{sem}}$ and any nonconstant shared
embedding function $E$ (i.e., $E$ is not constant on $\Sigma$), there exist untrusted inputs
$u \equiv_{\mathrm{sem}} u'$ with $u \neq u'$ as token sequences, such that the induced
sequence embeddings satisfy $E(u) \neq E(u')$ (i.e., they differ in at least one position of the
embedding sequence). That is, the shared sequence embedding map does not respect semantic
equivalence: semantically equivalent inputs produce distinct embedding sequences.
\end{lemma}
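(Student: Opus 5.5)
The plan is to build the required pair explicitly rather than extract it from an arbitrary abstract equivalence. The reason is a subtlety. If $E$ is not injective on $\Sigma$, a pair $u \equiv_{\mathrm{sem}} u'$ of the same length might differ only at positions holding tokens with identical embeddings, and then $E(u) = E(u')$. So I will work with the conservative instantiation of $\equiv_{\mathrm{sem}}$ as encoding equivalence from Remark~\ref{rem:sem-parametric}. I will read "nontrivial" as "contains at least the encoding-equivalence identifications", which matches that remark's assertion that any broader equivalence only strengthens the result. The lemma then follows by exhibiting one pair in encoding equivalence, since every admissible $\equiv_{\mathrm{sem}}$ contains it.

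The first and most robust case uses the addition of semantically vacuous token strings. Take any untrusted $u$ and any nonempty vacuous string $w$ (for example, a GCG-style suffix as in Remark~\ref{rem:gcg-sfc}, or a run of whitespace or filler tokens), and set $u' = u \| w$. Then $u \equiv_{\mathrm{sem}} u'$ by definition, and $u \neq u'$ because $|u'| > |u|$. Since $E$ is extended to sequences positionwise, $E(u)$ and $E(u')$ are elements of $(\mathbb{R}^d)^*$ of different lengths, hence distinct. This case does not even use nonconstancy of $E$.

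Second, to obtain a same-length witness, I will use a substitution-type transformation such as a homoglyph or character-level obfuscation. Nonconstancy gives tokens $t, t' \in \Sigma$ with $E(t) \neq E(t')$. I then need a semantics-preserving substitution replacing one token by another with a different embedding. I would first try to take $(t,t')$ itself as a homoglyph or obfuscation pair. Failing that, I would argue that some substitution pair in the encoding-equivalence family must have distinct embeddings: if every such pair collided, $E$ would be constant on each orbit of the transformations, and I would try to push this to contradict nonconstancy. This step is the main obstacle, because nonconstancy of $E$ alone does not force the specific equivalent tokens to embed differently. For that reason the main proof relies on the length-changing construction above, and the same-length version is only a strengthening.

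Finally, I will remark that the lemma's real content is that $E$ is defined on raw token sequences and never on $\equiv_{\mathrm{sem}}$-classes. So $E$ factors through the quotient $\Sigma^*/\!\equiv_{\mathrm{sem}}$ only in the degenerate case where every equivalent pair embeds identically, and the vacuous-suffix construction rules out that case.
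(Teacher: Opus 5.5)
Your proposal is correct. Its main witness, $u$ versus $u\|w$ for a nonempty vacuous $w$ distinguished by length, is the same device the paper's proof relies on when it appeals to ``appending semantically vacuous suffixes'' and concludes that the embeddings differ ``in at least one component (or in length).'' The difference is in how the hypotheses are handled.

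The paper takes one pair from nontriviality and a pair $t,t'$ from nonconstancy, but never connects them. The real work is done by the unargued sentence ``the adversary can construct semantically equivalent inputs that differ at such token positions,'' which silently assumes $\equiv_{\mathrm{sem}}$ is closed under appending vacuous strings. You make that assumption explicit, and it is genuinely needed. Under the paper's literal definition of nontriviality the lemma fails for non-injective $E$: take $\Sigma=\{a,b,c\}$ with $E(a)=E(b)\neq E(c)$, and let $\{a,b\}$ be the only non-singleton class of $\equiv_{\mathrm{sem}}$. You also correctly note that nonconstancy plays no role in the length-changing witness.

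The paper's parenthetical suggests how to finish your same-length strengthening. Instead of substituting inside $u$, compare $u\|w$ with $u\|w'$, where $w,w'$ are equal-length vacuous suffixes differing at one position holding $t$ versus $t'$. Both are equivalent to $u$, hence to each other by symmetry and transitivity, and they differ componentwise. This still requires the distinct-embedding pair to be admissible inside vacuous suffixes, which nonconstancy on $\Sigma$ does not supply; that is exactly your stated obstacle.

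The cleanest repair is to assume $E$ injective on $\Sigma$. Then every distinct equivalent pair, including the one nontriviality provides, is already a witness, so the lemma holds for every nontrivial $\equiv_{\mathrm{sem}}$. Any two distinct equal-length vacuous suffixes then also give a same-length witness.
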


\begin{proof}
By the nontriviality of $\equiv_{\mathrm{sem}}$, there exist distinct token sequences
$u \neq u'$ with $u \equiv_{\mathrm{sem}} u'$. Since $E$ is nonconstant, there exist tokens
$t \neq t'$ with $E(t) \neq E(t')$. The adversary can construct semantically equivalent
inputs that differ at such token positions (e.g., by appending semantically vacuous suffixes
containing different tokens). The sequence embeddings then differ in at least one
component (or in length). Concrete examples: a request in plaintext vs.\ base64
encoding, a request with vs.\ without an appended semantically vacuous
suffix~\citep{zou2023universal}, or a request in English vs.\ a semantics-preserving translation.
\end{proof}

\subsection{Provenance-Recovery Impossibility}

\begin{theorem}[Provenance-Recovery Impossibility]
\label{thm:prov-imposs}
Let $R \in \{\texttt{trusted}, \texttt{untrusted}\}$ denote the provenance label of a
representation instance $Z$ under test, induced by the per-position provenance variables
$R_j$ of Definition~\ref{def:prompted-input}. Let $Z$ be either a raw token embedding $E(t)$
or a contextual hidden state $h_j^{(\ell)}$ at a control-relevant node. Then the
Bayes-optimal provenance-recovery error is
\begin{equation}
    P_e^\star
    =
    \frac{1}{2}\Bigl(1 - \mathrm{TV}(P_{Z \mid R=\texttt{trusted}},\,
    P_{Z \mid R=\texttt{untrusted}})\Bigr),
    \label{eq:bayes-tv}
\end{equation}
under equal priors, where $\mathrm{TV}$ denotes total variation distance. In particular, perfect
provenance recovery ($P_e^\star = 0$) is impossible unless the two conditional representation
distributions are disjoint.
\end{theorem}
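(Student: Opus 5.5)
This is the classical Le Cam two-point identity for binary hypothesis testing, so I would prove it directly from the Bayes decision rule rather than through the earlier lemmas. Write $P_0 = P_{Z\mid R=\texttt{trusted}}$ and $P_1 = P_{Z\mid R=\texttt{untrusted}}$, both on the measurable space of representations ($\mathbb{R}^d$ for raw embeddings or contextual states). Fix a dominating measure $\mu = P_0 + P_1$ and densities $p_0 = dP_0/d\mu$, $p_1 = dP_1/d\mu$. Any (possibly randomized) provenance recovery rule is a measurable $\delta : \mathbb{R}^d \to [0,1]$, where $\delta(z)$ is the probability of declaring \texttt{untrusted}. Under equal priors, its error is
\[
P_e(\delta) = \tfrac12 \int (1-\delta)\, p_1 \, d\mu + \tfrac12 \int \delta\, p_0 \, d\mu .
\]

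First, I would minimize this pointwise in $\delta(z)$. The integrand is affine in $\delta(z)$, so the optimum chooses $\delta(z)=\mathbf{1}\{p_1(z) > p_0(z)\}$, the MAP rule, with ties arbitrary. This gives $P_e^\star = \tfrac12 \int \min(p_0,p_1)\,d\mu$ and shows that randomization cannot help. Second, I would use $\min(a,b) = \tfrac12(a+b-|a-b|)$ together with $\int p_0 = \int p_1 = 1$ and $\mathrm{TV}(P_0,P_1) = \tfrac12\int|p_0-p_1|\,d\mu = \sup_A |P_0(A)-P_1(A)|$. This yields exactly \eqref{eq:bayes-tv}. I would note that the expression is independent of the choice of $\mu$, so taking $\mu=P_0+P_1$ loses no generality.

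For the ``in particular'' clause, $P_e^\star = 0$ iff $\mathrm{TV}=1$ iff $\int\min(p_0,p_1)\,d\mu=0$. That holds iff there is a measurable set $A$ with $P_0(A)=1$ and $P_1(A)=0$, namely $A=\{p_0>p_1\}$, i.e.\ iff $P_0 \perp P_1$. I would read ``disjoint'' in the statement as mutual singularity. For the raw embedding case, I would then add a remark connecting to Lemma~\ref{lem:collision}: any token in $\Sigma_I\cap\Sigma_D$ with positive probability under both provenances puts common atoms in $P_0$ and $P_1$. Hence $\mathrm{TV}<1$ and $P_e^\star>0$. For contextual states, positivity of $P_e^\star$ is exactly Assumption~\ref{ass:non-disjoint}.

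The main obstacle is not the calculation but making the probabilistic setup precise. The statement leaves implicit the sampling model that induces the distribution of $Z$ given $R$: a random prompted input $x=(s\|u)$ together with a uniformly chosen position of each provenance. I would fix this explicitly at the start so that $P_0,P_1$ are well-defined Borel measures. I would also state clearly that the result concerns any measurable function of $Z$ alone, so it does not cover detectors with access to positional side information. That restriction is consistent with the scope discussion in Section~\ref{sec:scope}.
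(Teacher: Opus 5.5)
Your proposal is correct and follows the paper's route. The paper cites the standard TV--Bayes-error identity (which you derive directly by pointwise MAP minimization), and it then gets $\mathrm{TV}<1$ from Lemma~\ref{lem:collision} for raw embeddings and from Assumption~\ref{ass:non-disjoint} for contextual states, exactly as you do; your two caveats make the paper's informal ``supports overlap'' step rigorous, namely that a shared token must have positive probability under both provenances and that ``disjoint'' means mutual singularity.
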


\begin{proof}
Equation~\eqref{eq:bayes-tv} is the standard characterization of Bayes-optimal error for binary
hypothesis testing in terms of total variation
distance~\citep{tsybakov2009introduction}. It remains to show that
$\mathrm{TV}(P_{Z \mid R=\texttt{trusted}}, P_{Z \mid R=\texttt{untrusted}}) < 1$.

When $Z$ is the raw token embedding $E(t)$, Lemma~\ref{lem:collision} gives this directly:
shared vocabulary tokens produce identical embeddings regardless of provenance, so the supports
overlap and TV $< 1$.

For contextual representations $Z = h_j^{(\ell)}$ at control-relevant nodes, we invoke
Assumption~\ref{ass:non-disjoint}.
In either case, $\mathrm{TV} < 1$ and $P_e^\star > 0$.
\end{proof}

\subsection{Control-Path Exposure}

\begin{theorem}[Control-Path Exposure]
\label{thm:control-path}
Let $C$ denote the set of control-relevant nodes of a prompted action model. Suppose that for
some $c \in C$ and layer $\ell$, the attention value aggregation satisfies
\begin{equation}
    \sum_{j \in U} \alpha_{cj}^{(\ell)} > 0,
    \label{eq:control-path-positive}
\end{equation}
where $U$ is the set of untrusted positions. Then untrusted content enters the computation of a
node that causally influences $\Pi_{\mathrm{ctrl}} Y(x)$.
\end{theorem}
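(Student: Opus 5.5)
The plan is to argue directly from the attention decomposition~\eqref{eq:attn-split} together with the definition of $C$. No genericity assumption is needed here: the claim is only that untrusted content \emph{enters} the computation, not that it changes the output. Assumptions~\ref{ass:generic-wv} and~\ref{ass:local-readout} are reserved for the bridge proposition.

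\textbf{Step 1.} Fix the node $c \in C$ and layer $\ell$ given by the hypothesis. Split $\widetilde{h}_c^{(\ell)}(x)$ into its trusted and untrusted contributions as in~\eqref{eq:attn-split}. Hypothesis~\eqref{eq:control-path-positive} says the total attention mass on untrusted positions is positive, so there is some $j^\star \in U$ with $\alpha_{cj^\star}^{(\ell)} > 0$.

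\textbf{Step 2.} Show that $\widetilde{h}_c^{(\ell)}$ is a genuine function of the untrusted content. Holding the attention weights fixed, the partial derivative of $\widetilde{h}_c^{(\ell)}$ with respect to $h_{j^\star}^{(\ell-1)}$ is $\alpha_{cj^\star}^{(\ell)} W_V^{(\ell)}$. This is a nonzero linear map whenever $W_V^{(\ell)} \neq 0$, which I will note as a nondegeneracy convention; an identically zero value matrix makes the layer vacuous. The attention weights themselves depend on the untrusted keys through the softmax, which gives a second channel of dependence, but I do not need it.

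Because $h_{j^\star}^{(\ell-1)}$ is computed from the embeddings $E(u)$ by the shared pipeline (threat-model property (1)), untrusted content lies in the computational ancestry of $\widetilde{h}_c^{(\ell)}$. Equivalently, in the computation graph there is a directed edge from position $j^\star \in U$ into node $c$, carrying weight $\alpha_{cj^\star}^{(\ell)} > 0$.

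\textbf{Step 3.} Compose with the definition of $C$. By definition, $c$ is a node whose hidden state causally influences $\Pi_{\mathrm{ctrl}} Y(x)$ through the readout $\Gamma_c$, and by threat-model property (3) the control outputs are computed from such nodes. Concatenating the edge from Step 2 with the path $c \to \Gamma_c \to \Pi_{\mathrm{ctrl}} Y$ gives a directed path from untrusted content to the control-authoritative output. This is exactly the assertion that untrusted content enters the computation of a node that causally influences $\Pi_{\mathrm{ctrl}} Y(x)$.

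The main obstacle is making ``enters the computation'' precise without over-claiming. A structural reading, namely ancestry in the computation graph with nonzero edge weight, makes the theorem essentially immediate. A functional reading, namely that perturbing $u$ actually changes $\widetilde{h}_c^{(\ell)}$, needs $W_V^{(\ell)}$ to act nontrivially on the achievable perturbations of $h_{j^\star}^{(\ell-1)}$. I would adopt the structural reading, and add a remark that the functional strengthening, and the further step to behavioral change, follow under Assumptions~\ref{ass:generic-wv} and~\ref{ass:local-readout} in Section~\ref{sec:bridge}.

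One more point needs care in writing. The softmax makes every $\alpha_{cj}^{(\ell)}$ strictly positive whenever the logits are finite. So hypothesis~\eqref{eq:control-path-positive} holds automatically for any causal decoder in which $c$ sits at or after an untrusted position, and this is what makes the theorem bite for standard architectures. Hard masks are the exception: they can set the relevant weights exactly to zero, which is precisely the out-of-scope typed-attention case.
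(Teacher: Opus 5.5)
Your proposal is correct and follows the same route as the paper: it splits the update via \eqref{eq:attn-split}, uses the positive untrusted attention mass to conclude that the untrusted term is present at a node of $C$, and notes that standard causal decoders satisfy the hypothesis automatically. Your computation-graph reading of ``enters the computation'' and the softmax-positivity and hard-mask remarks just make explicit what the paper's brief proof leaves implicit.
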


\begin{proof}
Equation~\eqref{eq:attn-split} decomposes the control-relevant update into trusted and untrusted
contributions. Condition~\eqref{eq:control-path-positive} implies that the untrusted term is
present with nonzero total attention mass. In a standard causal decoder, the generation-position
hidden states (which are the control-relevant nodes) attend to both the trusted system prefix
and the untrusted user content, so condition~\eqref{eq:control-path-positive} is satisfied
by construction for any useful model. An architecture where $\sum_{j \in U} \alpha_{cj} = 0$ for
all control-relevant nodes would be unable to condition its output on user input, rendering it
functionally useless.
\end{proof}

\subsection{Finite-Coverage Invariance Gap}

\begin{lemma}[Finite-Coverage Invariance Gap]
\label{lem:finite-coverage}
Let $s \in \Sigma^*$ be a fixed trusted prefix and let
\[
    [u]_{\mathrm{sem}} := \{u' \in \Sigma^* : u' \equiv_{\mathrm{sem}} u\}
\]
denote the semantic-equivalence class of an untrusted input $u$. Assume that for some
application-relevant semantics, $[u]_{\mathrm{sem}}$ is infinite or grows super-polynomially
under admissible encoding transformations (paraphrase, language transfer, formatting variation,
character-level obfuscation, Unicode variation, adversarial suffix appending, or other
semantics-preserving rewritings).

Let $\mathcal{T} \subset [u]_{\mathrm{sem}}$ be any finite set of encodings seen during
training, alignment, or adversarial hardening. If the architecture provides no immutable mechanism
enforcing Equation~\eqref{eq:sfc} by construction, then invariance of
$\Pi_{\mathrm{ctrl}} Y(s \| u')$ over all $u' \in [u]_{\mathrm{sem}}$ cannot be guaranteed
solely from correctness on $\mathcal{T}$.
\end{lemma}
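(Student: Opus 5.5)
We argue by constructing two models that agree on $\mathcal{T}$ but disagree on the rest of the class, a standard no-free-lunch or indistinguishability argument. Concretely, we will show that for any finite $\mathcal{T}\subset[u]_{\mathrm{sem}}$ there exist two prompted action models $\mathcal{M}_1,\mathcal{M}_2$ in the admissible (shared-embedding, no immutable enforcement) class with the following two properties. First, they coincide on $\{s\|u' : u'\in\mathcal{T}\}$, so both are equally "correct" there. Second, $\mathcal{M}_1$ is invariant on $[u]_{\mathrm{sem}}$, while $\mathcal{M}_2$ has $\Pi_{\mathrm{ctrl}}Y_2(s\|u^\ast)\neq \Pi_{\mathrm{ctrl}}Y_2(s\|u)$ for some $u^\ast\in[u]_{\mathrm{sem}}\setminus\mathcal{T}$. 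Because correctness on $\mathcal{T}$ is the only certificate available, no procedure using only that information can distinguish $\mathcal{M}_1$ from $\mathcal{M}_2$. Hence it cannot guarantee invariance.

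The key steps run in order as follows.

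\emph{Step 1.} Since $\mathcal{T}$ is finite and $[u]_{\mathrm{sem}}$ is infinite (or super-polynomially large, hence eventually larger than any fixed training budget), we can pick $u^\ast\in[u]_{\mathrm{sem}}\setminus\mathcal{T}$.

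\emph{Step 2.} By Lemma~\ref{lem:non-faithful}, or more directly because $u^\ast\notin\mathcal{T}$, we can choose $u^\ast$ so that the embedding sequence $E(s\|u^\ast)$ differs from every $E(s\|u')$ with $u'\in\mathcal{T}$. We can always achieve this by appending a vacuous suffix of length exceeding $\max_{u'\in\mathcal{T}}|u'|$, which is admissible under the encoding equivalence of Remark~\ref{rem:sem-parametric}.

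\emph{Step 3.} Starting from any $f_1$ that is SFC-correct on the class, we define $f_2$ to agree with $f_1$ everywhere except on the single input $E(s\|u^\ast)$, or a small neighbourhood of it. There, $f_2$ places different mass on some control action in $\mathcal{A}_{\mathrm{ctrl}}$. Finitely many distinct points can be separated, so this modification is consistent.

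\emph{Step 4.} We must check that $\mathcal{M}_2$ still lies in the architecture class: it must be shared-space, have no immutable provenance enforcement, and compute its control outputs from the shared pipeline. Theorem~\ref{thm:control-path} supplies the exposed path through which the $u^\ast$-specific perturbation reaches the control nodes. Assumptions~\ref{ass:generic-wv} and~\ref{ass:local-readout} ensure that this perturbation is not annihilated.

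The main obstacle is Step 4. Realizability by an arbitrary function $f_2$ is easy, but the argument is only convincing if $f_2$ is implementable as a transformer of the considered form. To handle this, I would first appeal to universal-approximation results for transformers on finite sets of distinct sequences: a finite set of separated inputs can be interpolated to any prescribed outputs. Using the continuity of softmax attention, I would perturb the parameters of $\mathcal{M}_1$ so that the outputs on $\mathcal{T}$ change by less than any tolerance, while the control output at $u^\ast$ moves by a fixed amount via the nonzero Jacobian of Assumption~\ref{ass:local-readout}. If exact agreement on $\mathcal{T}$ is required, I would instead state the conclusion as agreement within a tolerance $\epsilon$. Finally, I would note that the hypothesis excluding an immutable mechanism enforcing Equation~\eqref{eq:sfc} is exactly what rules out the escape route: no architectural constraint forces $f_2=f_1$ at $u^\ast$.
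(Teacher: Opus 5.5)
Your argument is correct in substance and shares the paper's skeleton. You pick $u^\ast\in[u]_{\mathrm{sem}}\setminus\mathcal{T}$ whose embedding sequence is not identified with any $E(s\|u')$, $u'\in\mathcal{T}$, and observe that constraints at finitely many points do not fix the output there. You differ at the decisive step. The paper frames the argument as a contradiction but then simply asserts that there is ``no architectural propagation'' of the correctness constraint to $u^\ast$. You instead exhibit an admissible model that is correct on $\mathcal{T}$ but not invariant, which is exactly what refuting the entailment ``correct on $\mathcal{T}$ $\Rightarrow$ invariant on $[u]_{\mathrm{sem}}$'' requires.

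Your length device is also stronger than the paper's appeal to Lemma~\ref{lem:non-faithful}. That lemma is purely existential, so the paper gets distinctness of $E(u^\star)$ from all of $\mathcal{T}$ only ``in general''; and $u^\ast\notin\mathcal{T}$ alone does not give it when $E$ is not injective on tokens. Choosing $|u^\ast|>\max_{u'\in\mathcal{T}}|u'|$ makes distinctness unconditional. It does not even need closure under vacuous suffixes, since an infinite class over a finite vocabulary contains arbitrarily long strings. The price of your route is that you must show realizability inside the architecture class, which the paper never addresses.

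On that realizability step, some corrections:
\begin{itemize}
\item $\mathcal{M}_1$ is unnecessary. One admissible $\mathcal{M}_2$ suffices, and keeping $\mathcal{M}_1$ means you owe a proof that an admissible, fully SFC-correct $f_1$ exists.
\item Theorem~\ref{thm:control-path} and Assumptions~\ref{ass:generic-wv}--\ref{ass:local-readout} describe a given model; they do not help you build one.
\item Your parameter-perturbation fallback fails as stated. Continuity makes the change at $u^\ast$ small as well, and nothing supplies a direction that moves $u^\ast$ by a fixed amount while leaving $\mathcal{T}$ nearly fixed. Assumption~\ref{ass:local-readout} is a Jacobian in hidden-state space, not weight space.
\end{itemize}

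A cleaner realization keys on length and gives exact agreement on $\mathcal{T}$, so no $\epsilon$-weakening is needed. Take a control head that reads the generation position's length (e.g., from an absolute positional encoding kept in its own coordinates) through a ReLU threshold placed strictly between $\max_{u'\in\mathcal{T}}|s\|u'|$ and $|s\|u^\ast|$. It outputs one fixed control distribution on all of $\mathcal{T}$ and a different one at $u^\ast$.

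Finally, in the case where the class is only ``super-polynomially large'' but finite, both you and the paper need $\mathcal{T}$ to be a proper subset. ``Eventually larger than any budget'' does not by itself produce $u^\ast$.
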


\begin{proof}
We argue by contradiction. Suppose, for the sake of contradiction, that correctness of
$\Pi_{\mathrm{ctrl}} Y(s \| \cdot)$ on $\mathcal{T}$ implies invariance of
$\Pi_{\mathrm{ctrl}} Y(s \| u')$ over all $u' \in [u]_{\mathrm{sem}}$. We exhibit an
$u^\star \in [u]_{\mathrm{sem}} \setminus \mathcal{T}$ on which invariance is not certified.

Since $[u]_{\mathrm{sem}}$ is infinite (or grows super-polynomially under admissible encoding
transformations) and $\mathcal{T}$ is finite, the set complement
$[u]_{\mathrm{sem}} \setminus \mathcal{T}$ is nonempty. Pick any
$u^\star \in [u]_{\mathrm{sem}} \setminus \mathcal{T}$. By Lemma~\ref{lem:non-faithful},
since $\equiv_{\mathrm{sem}}$ is nontrivial and $E$ is nonconstant, $u^\star$ produces an
embedding sequence $E(u^\star)$ that is, in general, distinct from the embedding sequences
of all elements in $\mathcal{T}$ at the token level.

The model's forward pass $f \circ E$ is a deterministic function of its input embedding
sequence. Correctness constraints imposed on the finite set
$\{f(E(s \| u')) : u' \in \mathcal{T}\}$ pin down the model's output only on those finitely
many points. Because the architecture provides no immutable mechanism enforcing
Equation~\eqref{eq:sfc} by construction, there is no architectural identification of
$E(u^\star)$ with any element of $\{E(u') : u' \in \mathcal{T}\}$, and consequently no
architectural propagation of the correctness constraint from $\mathcal{T}$ to $u^\star$.

Therefore, correctness on $\mathcal{T}$ does not entail
$\Pi_{\mathrm{ctrl}} Y(s \| u^\star) = \Pi_{\mathrm{ctrl}} Y(s \| u)$, and invariance over
the full class $[u]_{\mathrm{sem}}$ cannot be certified solely from correctness on $\mathcal{T}$.

The gradient-based universal adversarial suffixes of \citet{zou2023universal} provide empirical
confirmation: they construct semantically negligible encoding variations that reliably induce
control-authoritative behavioral changes on aligned models, demonstrating that finite alignment
coverage leaves exploitable invariance gaps in the encoding space.
\end{proof}

\subsection{Main Result: Impossibility of Perfect Semantic-Faithful Control}

\begin{theorem}[Impossibility of Perfect Semantic-Faithful Control]
\label{thm:sfc-imposs}
Assume:
\begin{enumerate}
    \item The model has an exposed control path as in Theorem~\ref{thm:control-path}: untrusted
    content enters control-relevant computation.
    \item Provenance is recovered only from shared representations, with Bayes-optimal error
    given by Theorem~\ref{thm:prov-imposs}.
    \item Application-relevant semantic-equivalence classes are infinite or combinatorially vast,
    and invariance is enforced only through finite training, alignment, or adversarial hardening
    rather than by immutable architectural separation.
\end{enumerate}
Then no mechanism operating solely within the shared representational pipeline can guarantee
perfect Semantic-Faithful Control for all prompted inputs.
\end{theorem}

\begin{proof}
We establish the impossibility by showing that the three premises jointly preclude any
in-pipeline mechanism guaranteeing Equation~\eqref{eq:sfc} for all prompted inputs.

\textit{Step 1: In-pipeline defenses reduce to learned discrimination.}
By Theorem~\ref{thm:control-path}, untrusted content enters control-relevant computation
through the shared value-aggregation pathway. By Theorem~\ref{thm:prov-imposs}, the
Bayes-optimal provenance-recovery error from shared representations is strictly positive
(under Lemma~\ref{lem:collision} for token embeddings, and Assumption~\ref{ass:non-disjoint}
for contextual states), so no perfect provenance-based filter exists within the shared space.
Any defense mechanism operating solely within the shared representational pipeline, lacking
immutable architectural separation, must therefore rely on learned decision boundaries or
learned invariances inside the same shared space, since by hypothesis (premise 3) no other
enforcement mechanism is available to it.

\textit{Step 2: Learned invariances cannot cover the equivalence class.}
Fix any semantic-equivalence class $[u]_{\mathrm{sem}}$ satisfying premise 3, i.e., $[u]_{\mathrm{sem}}$
is infinite or grows super-polynomially under admissible encoding transformations. Let
$\mathcal{T} \subset [u]_{\mathrm{sem}}$ be the (necessarily finite) set of representatives
seen during training, alignment, or adversarial hardening. By
Lemma~\ref{lem:finite-coverage}, correctness on $\mathcal{T}$ does not certify invariance
over all of $[u]_{\mathrm{sem}}$: there exists $u^\star \in [u]_{\mathrm{sem}} \setminus \mathcal{T}$
on which the learned invariance is not constrained.

\textit{Step 3: Combining the steps.}
For this $u^\star$, the model's control-authoritative output $\Pi_{\mathrm{ctrl}} Y(s \| u^\star)$
is determined by a forward pass that includes contributions from $u^\star$'s embeddings via
the exposed control path of Step 1. Since the in-pipeline defense relies on learned decisions
within the shared space (Step 1), and since learned coverage cannot extend to $u^\star$
(Step 2), the defense cannot guarantee
$\Pi_{\mathrm{ctrl}} Y(s \| u^\star) = \Pi_{\mathrm{ctrl}} Y(s \| u)$ for the $u \in \mathcal{T}$
that is semantically equivalent to $u^\star$. Hence Equation~\eqref{eq:sfc} fails on the pair
$(u, u^\star)$, and perfect Semantic-Faithful Control is impossible within the shared
representational pipeline.
\end{proof}

\begin{corollary}[No Perfect In-Pipeline Prevention]
\label{cor:no-perfect}
In any shared-embedding architecture with exposed control paths and no immutable
provenance-enforcing mechanism, perfect prompt-injection prevention is impossible within the
shared representational pipeline.
\end{corollary}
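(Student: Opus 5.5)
The plan is to obtain Corollary~\ref{cor:no-perfect} as a direct specialization of Theorem~\ref{thm:sfc-imposs}. This requires two things: check that the corollary's hypotheses supply the three premises of that theorem, and check that ``perfect prompt-injection prevention'' implies perfect Semantic-Faithful Control. Given both, impossibility of the latter yields impossibility of the former by contraposition.

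\textbf{Step 1: Reduce prevention to SFC.} Using the attack classes listed in Section~\ref{sec:scope}, I would fix what ``perfect prevention'' means. It should mean that no adversarially chosen untrusted content can alter refusal, tool-authorization, routing, or memory-write behavior other than through its semantic content. Formally, for every $s$ and every $u \equiv_{\mathrm{sem}} u'$, prevention requires $\Pi_{\mathrm{ctrl}} Y(s\|u) = \Pi_{\mathrm{ctrl}} Y(s\|u')$. This is exactly Equation~\eqref{eq:sfc}. Indeed, a pair violating~\eqref{eq:sfc} is precisely an injection: an attacker submits $u'$ in place of the semantically equivalent $u$ and changes control-authoritative behavior, as with a GCG suffix in Remark~\ref{rem:gcg-sfc}. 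So prevention implies SFC.

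\textbf{Step 2: Discharge the premises.}
\begin{itemize}
\item \emph{Premise 1.} ``Exposed control paths'' is read as condition~\eqref{eq:control-path-positive}, so Theorem~\ref{thm:control-path} applies.
\item \emph{Premise 2.} ``Shared-embedding'' together with ``no immutable provenance-enforcing mechanism'' means provenance can only be inferred from $Z$. Theorem~\ref{thm:prov-imposs} then applies: Lemma~\ref{lem:collision} handles token embeddings, and Assumption~\ref{ass:non-disjoint} handles contextual states.
\item \emph{Premise 3.} The same absence of immutable enforcement means invariance can only come from finite training or hardening. For infinitude of equivalence classes, I would use the encoding-equivalence instantiation of Remark~\ref{rem:sem-parametric}. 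Appending vacuous suffixes of unbounded length gives infinitely many $u' \equiv_{\mathrm{sem}} u$, so $[u]_{\mathrm{sem}}$ is infinite for any $u$.
\end{itemize}

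\textbf{Step 3: Conclude.} Theorem~\ref{thm:sfc-imposs} then says no in-pipeline mechanism guarantees SFC. By Step 1, none guarantees perfect prevention.

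\textbf{Main obstacle.} The weakest link is Step 1's identification of ``prompt-injection prevention'' with SFC, since the corollary uses the informal term. I would handle this by stating explicitly that prevention is taken to mean SFC, justified by Section~\ref{sec:scope}'s claim that SFC is the minimal necessary property: any notion of prevention strong enough to block the listed attack classes entails~\eqref{eq:sfc}. A secondary subtlety is premise~3's infinitude. I would make this concrete via the suffix construction so that it does not rest on an unstated assumption.
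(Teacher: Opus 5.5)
Your proposal is correct and follows the paper's route. The paper gives no separate argument for Corollary~\ref{cor:no-perfect}; it treats it as an immediate specialization of Theorem~\ref{thm:sfc-imposs}, with perfect prompt-injection prevention read as perfect SFC, and your identification of prevention with Equation~\eqref{eq:sfc} and your vacuous-suffix argument for premise~3 just spell out what the paper leaves implicit. Two small caveats: Step~1 holds only by your stipulation, not because every SFC violation is an injection, since SFC is symmetric and extra refusal on an obfuscated input violates it without being an attack. Also, discharging premise~2 for contextual states (here as in the paper) uses Assumption~\ref{ass:non-disjoint}, which is not among the corollary's stated hypotheses.
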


\subsection{Bridge: Representation-Level Violations Propagate Behaviorally}
\label{sec:bridge}

\begin{proposition}[Bridge from Representation-Level Violation to Behavioral Change]
\label{prop:bridge}
Assume Assumptions~\ref{ass:generic-wv} and~\ref{ass:local-readout}. If a semantically
irrelevant untrusted perturbation induces a nonzero control-relevant perturbation at some node
$c \in C$, then the resulting control-authoritative output distribution changes on a set of
perturbations of full measure within the admissible perturbation family.
\end{proposition}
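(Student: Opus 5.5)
The plan is to factor the map from an admissible untrusted perturbation to the control-authoritative output distribution into a linear stage and a smooth nonlinear stage, and then use an analytic zero-set argument.

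First, parametrize the admissible perturbation family by $\delta$ ranging over an open set $\Omega$ of some finite-dimensional space. By Equation~\eqref{eq:attn-split}, replacing $h_j^{(\ell-1)}$ by $h_j^{(\ell-1)}+\delta_j$ for $j\in U$ changes the control-relevant state by
\[
\Delta\widetilde h_c(\delta)=\sum_{j\in U}\alpha_{cj}^{(\ell)}W_V^{(\ell)}\delta_j + r(\delta),
\]
where $r(\delta)$ collects the second-order effect of $\delta$ on the attention weights. To first order this is a linear map $L\delta$. Assumption~\ref{ass:generic-wv} says precisely that $L$ composed with the control projection is not identically zero on $\Omega$. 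The hypothesis gives the nonzero control-relevant perturbation: there is at least one $\delta_0$ with $\Gamma_c$-projected perturbation $\neq 0$. I will fix such a $\delta_0$.

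Second, consider the composed map $\Phi(\delta)=\Pi_{\mathrm{ctrl}}\,Y$ evaluated at the perturbed state, that is, $\Phi = \text{(softmax heads)}\circ\Gamma_c\circ(\widetilde h_c+\Delta\widetilde h_c(\cdot))$. With standard smooth activations (softmax, GELU, layer norm away from zero variance), $\Phi$ is real-analytic in $\delta$. Define the scalar function $g(\delta)=\|\Phi(\delta)-\Phi(0)\|^2$, which is also real-analytic. I will use Assumption~\ref{ass:local-readout} together with the nonzero perturbation at $\delta_0$ to show $g\not\equiv 0$ on $\Omega$. At the control-relevant state the readout Jacobian $D\Gamma_c$ acts nontrivially on the direction $L\delta_0$, and so does the output-head Jacobian. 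Hence the directional derivative of $\Phi$ along $\delta_0$ at $0$ is nonzero, so $\Phi$ is not constant near $0$.

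Third, I will invoke the standard fact that the zero set of a real-analytic function that is not identically zero on a connected open set has Lebesgue measure zero. Then $\{\delta\in\Omega: \Phi(\delta)=\Phi(0)\}$ is null, so the output distribution changes for almost every admissible perturbation, which is the full-measure claim.

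The main obstacle is the second step. Assumption~\ref{ass:local-readout} is stated only "almost everywhere," and ReLU-type networks are only piecewise analytic. To handle ReLU I would first try working on each linear region, where $\Phi$ is analytic, and apply the zero-set argument region by region; countably many null sets remain null. A second gap is that nonvanishing of the derivative along one direction must be turned into non-constancy on the whole connected component of $\Omega$. Analyticity, through the identity theorem, is exactly what closes that gap. I would also need to state explicitly that $\Omega$ is connected and that the measure is Lebesgue measure on the perturbation parameters.
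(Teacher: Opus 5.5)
Your argument is correct in substance, but it follows a different route from the paper. The paper's proof is first-order and works in the readout space. It writes $\Pi_{\mathrm{ctrl}}Y = g\circ\Gamma_c$, with $g$ the softmax over control logits (not your squared-norm $g$), takes $\delta z_c\neq 0$ from Assumption~\ref{ass:generic-wv}, and observes that $\ker J_g$ is a proper linear subspace of $\mathbb{R}^{d_c}$, hence Lebesgue-null, so almost every $\delta z_c$ is not annihilated. You instead pull everything back to the perturbation parameters $\delta\in\Omega$. You then apply the zero-set lemma (a real-analytic function that is not identically zero on a connected open set has a null zero set) to $\|\Phi(\delta)-\Phi(0)\|^2$.

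Your route buys two things the paper's argument does not deliver. First, it controls the actual finite change $\Phi(\delta)-\Phi(0)$ rather than only its linearization; lying outside $\ker J_g$ gives a nonzero first-order term, not a nonzero finite difference. Second, your null set sits in the admissible perturbation family itself. The paper's null set lives in $\mathbb{R}^{d_c}$, and transferring it back needs the induced law of $\delta z_c$ to be absolutely continuous there. That need not hold when the control path has rank below $d_c$. Your route also explains why the \emph{almost everywhere} in Assumption~\ref{ass:local-readout} costs you nothing: you only need $D\Phi\neq 0$ at a single point of $\Omega$.

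The price is real-analyticity of the pipeline (softmax, GELU/SiLU, LayerNorm away from zero variance) and connectedness of $\Omega$. For ReLU you additionally need $\Phi$ to be nonconstant on every activation region. Your region-by-region step silently assumes this: a region on which $\Phi$ is constant would put a set of positive measure into the zero set. You should extract nonconstancy from the a.e.\ clause of Assumption~\ref{ass:local-readout}.

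Two small repairs are needed.
\begin{enumerate}
\item The change in the attention weights $\alpha_{cj}^{(\ell)}$ enters $\Delta\widetilde h_c$ at first order, through the keys and the softmax normalization, not at second order. So $L$ is not the full derivative at $0$. Either restrict the family to value-only perturbations or use the full Jacobian.
\item The hypothesis gives a nonzero finite difference at $\delta_0$, not a nonzero derivative along $\delta_0$. Take the direction from Assumption~\ref{ass:generic-wv} instead. Then apply Assumption~\ref{ass:local-readout} to the composed readout from $\widetilde h_c^{(\ell)}$ to $\Pi_{\mathrm{ctrl}}Y$, rather than asserting nontriviality of $D\Gamma_c$ and of the output-head Jacobian separately.
\end{enumerate}
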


\begin{proof}
Let $\delta z_c = \Gamma_c(\widetilde{h}_c^{(\ell)}(s \| u)) -
\Gamma_c(\widetilde{h}_c^{(\ell)}(s \| u'))$ denote the control-readout perturbation induced by
replacing $u$ with a semantically equivalent $u'$. The full composition from hidden state to
behavioral output is $\Pi_{\mathrm{ctrl}} Y = g \circ \Gamma_c$, where
$g : \mathbb{R}^{d_c} \to \Delta(\mathcal{A}_{\mathrm{ctrl}})$ is the final output map
(e.g., softmax over control logits).

Under Assumption~\ref{ass:generic-wv}, $\delta z_c \neq 0$ for the perturbation family of
interest. Under Assumption~\ref{ass:local-readout}, $g$ is locally nonconstant on the
perturbation directions, so its Jacobian $J_g$ has nonzero action on these directions almost
everywhere in the region of interest. The kernel of $J_g$ is a linear subspace of strictly
lower dimension than the ambient perturbation space, hence a proper submanifold of measure
zero with respect to Lebesgue measure. Therefore the set of perturbation directions that
$g$ annihilates has Lebesgue measure zero. With probability one under any perturbation
distribution absolutely continuous with respect to Lebesgue measure, a nonzero
representation-level violation $\delta z_c$ produces a nonzero behavioral change in
$g(\Gamma_c(\widetilde{h}_c^{(\ell)}(s \| u)))$ versus $g(\Gamma_c(\widetilde{h}_c^{(\ell)}(s \| u')))$.
\end{proof}

\subsection{Quantitative Bound}

\begin{theorem}[Sequence-Level Adversarial Bound]
\label{thm:quantitative}
For $n$-token untrusted sequences, let $Z_n$ denote the induced sequence-level representation.
Let $\mathfrak{A}(B, n)$ denote the class of adversaries with computational budget $B$ and
sequence length $n$. Define:
\begin{equation}
    \delta_n^\star(B)
    :=
    \inf_{\mathcal{D} \in \mathfrak{A}(B,n)}
    \mathrm{TV}\!\left(
    P_{Z_n \mid R = \texttt{trusted}},\,
    P_{Z_n \mid U \sim \mathcal{D}}
    \right).
    \label{eq:adversarial-tv}
\end{equation}
Then the Bayes-optimal error for distinguishing trusted from adversarially chosen untrusted
sequence representations is:
\begin{equation}
    P_{e,n}^\star(B) \geq \frac{1}{2}(1 - \delta_n^\star(B)).
    \label{eq:seq-bound}
\end{equation}
\end{theorem}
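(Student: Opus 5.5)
The plan is to reduce the statement to the same binary hypothesis-testing characterization used in Theorem~\ref{thm:prov-imposs}, applied to one fixed adversary at a time, and then optimize over the adversary class. Fix $\mathcal{D} \in \mathfrak{A}(B,n)$ and consider testing $H_0 : Z_n \sim P_{Z_n \mid R=\texttt{trusted}}$ against $H_1 : Z_n \sim P_{Z_n \mid U \sim \mathcal{D}}$ with equal priors. For any measurable decision rule $\phi$, the error is
\[
P_e(\phi) = \tfrac12\bigl(P_0(\phi = 1) + P_1(\phi = 0)\bigr) = \tfrac12\bigl(1 - (P_0(A^c) - P_1(A^c))\bigr),
\]
where $A = \{\phi = 0\}$. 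Since $P_0(A^c) - P_1(A^c) \le \mathrm{TV}(P_0,P_1)$, every rule has error at least $\tfrac12(1-\mathrm{TV}(P_0,P_1))$. The Neyman--Pearson/Le Cam argument shows this is attained by the rule accepting $H_0$ on $\{dP_0 \ge dP_1\}$. This gives $P_{e,n}^\star(\mathcal{D}) = \tfrac12\bigl(1 - \mathrm{TV}(P_0, P_{\mathcal{D}})\bigr)$ for each fixed $\mathcal{D}$. I would cite \citet{tsybakov2009introduction} exactly as in Theorem~\ref{thm:prov-imposs}, with the per-token $Z$ replaced by the sequence representation $Z_n$.

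The second step passes from a fixed $\mathcal{D}$ to the budgeted class. I would interpret $P_{e,n}^\star(B)$ as the error the defender faces against the best adversary in $\mathfrak{A}(B,n)$, namely $\sup_{\mathcal{D}} P_{e,n}^\star(\mathcal{D})$. The map $t \mapsto \tfrac12(1-t)$ is decreasing, so
\[
\sup_{\mathcal{D}\in\mathfrak{A}(B,n)} \tfrac12\bigl(1-\mathrm{TV}(P_0,P_{\mathcal{D}})\bigr) = \tfrac12\bigl(1 - \inf_{\mathcal{D}} \mathrm{TV}(P_0,P_{\mathcal{D}})\bigr) = \tfrac12\bigl(1-\delta_n^\star(B)\bigr).
\]
If the infimum is not attained, I would take a minimizing sequence $\mathcal{D}_k$ with $\mathrm{TV}(P_0,P_{\mathcal{D}_k}) \downarrow \delta_n^\star(B)$. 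Each $\mathcal{D}_k$ forces error at least $\tfrac12(1-\mathrm{TV}_k)$, and the supremum over $k$ gives the bound. This yields the inequality~\eqref{eq:seq-bound}, with equality under the sup interpretation. The inequality form is also robust if $P_{e,n}^\star(B)$ is defined through any adversary at least as strong as each member of the class.

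The main obstacle is definitional rather than technical. The statement does not specify whether the defender knows $\mathcal{D}$, or whether $P_{e,n}^\star(B)$ is a minimax quantity $\inf_\phi \sup_{\mathcal{D}} P_e(\phi,\mathcal{D})$.

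In the minimax case I would argue directly. For any fixed $\phi$, $\sup_{\mathcal{D}} P_e(\phi,\mathcal{D}) \ge P_e(\phi,\mathcal{D}_k) \ge \tfrac12(1-\mathrm{TV}_k)$ for every $k$. Hence the bound holds uniformly in $\phi$, and taking $\inf_\phi$ preserves it. So the bound holds under either reading. Only the known-$\mathcal{D}$ reading gives equality; the minimax reading gives just the inequality.

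I would also verify measurability, i.e., that $Z_n$ takes values in a standard Borel space, for example $(\mathbb{R}^d)^n$, so that the densities $dP_0/d\mu$ and $dP_{\mathcal{D}}/d\mu$ exist with respect to $\mu = P_0 + P_{\mathcal{D}}$.
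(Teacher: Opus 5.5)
Your proposal is correct and follows the paper's route: apply the TV--Bayes-error identity at the sequence level for each fixed $\mathcal{D}$, then let the adversary minimize TV. You are more careful than the paper about the sup-versus-minimax readings of $P_{e,n}^\star(B)$ and about a non-attained infimum, while the paper additionally invokes Assumption~\ref{ass:non-disjoint}, but only to show $\delta_n^\star(B)<1$, which concerns nontriviality of the bound and is not needed for the inequality itself. One harmless slip: with $A=\{\phi=0\}$ the error is $\tfrac12\bigl(1-(P_0(A)-P_1(A))\bigr)$ rather than $\tfrac12\bigl(1-(P_0(A^c)-P_1(A^c))\bigr)$, but since $|P_0(S)-P_1(S)|\le\mathrm{TV}(P_0,P_1)$ for every event $S$, neither your lower bound nor your optimal rule is affected.
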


\begin{remark}
Intuitively, as the adversary's budget grows or the semantic-equivalence class becomes richer,
$\delta_n^\star(B)$ is expected to decrease, driving the error toward $1/2$ (random guessing).
The formal bound in Equation~\eqref{eq:seq-bound} holds for any value of $\delta_n^\star(B)$;
the monotonic decrease is an empirical expectation rather than a proved property of the bound.
\end{remark}

\begin{proof}
Equation~\eqref{eq:seq-bound} is immediate from the standard TV--Bayes-error identity applied
at the sequence level. The adversary selects the untrusted distribution $\mathcal{D}$ to minimize
TV, exploiting the shared embedding structure to produce representations close to the trusted
distribution. By Assumption~\ref{ass:non-disjoint}, the contextual representation distributions
for trusted and untrusted content are not disjoint, so the infimum in
Equation~\eqref{eq:adversarial-tv} is strictly less than $1$ whenever the adversary class
$\mathfrak{A}(B,n)$ contains a nontrivial distribution over admissible untrusted inputs, and
the error bound follows.
\end{proof}

\section{Positional and Segment Encoding Insufficiency}
\label{sec:positional}

A natural objection is that positional encodings, segment embeddings, or special tokens already
provide partial separation. We address this with a three-tier analysis.

\begin{proposition}[Insufficiency of Positional and Segment Markers]
\label{prop:positional}
Positional and soft segment markers are insufficient for enforcing Semantic-Faithful Control.
\end{proposition}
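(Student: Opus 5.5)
The argument goes through the paper's own "three-tier analysis": each kind of marker is taken in turn, and for each one I show that it does not supply the immutable provenance enforcement of property (2) in Section~\ref{sec:scope}. Theorem~\ref{thm:sfc-imposs} and Corollary~\ref{cor:no-perfect} then apply unchanged. The three tiers are (i) absolute or relative positional encodings, (ii) learned soft segment embeddings added to the token embedding, and (iii) delimiter or special tokens such as XML tags, chat-template role tokens and ``sandwich'' markers. In every case the target is the same: the marker may let the model \emph{locate} untrusted content, but it does not stop that content from entering the aggregation in Equation~\eqref{eq:attn-split}. It also does not break the finite-coverage argument of Lemma~\ref{lem:finite-coverage}.

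\textbf{Tier (i), positional encodings.} Position is a function of index, not of provenance. Write $h_j^{(0)} = E(x_j) + p_j$. The untrusted set $U$ is chosen by the adversary through the length and contents of $u$, and identical token content at the same index gets the same vector whether it is trusted or not. Provenance can therefore be read off position only under a convention about where $s$ ends. Moreover, $\sum_{j\in U}\alpha_{cj}>0$ still holds for any useful model, so Theorem~\ref{thm:control-path} applies. Semantically equivalent $u,u'$ at the same positions still differ in $E(u)$ versus $E(u')$ by Lemma~\ref{lem:non-faithful}, and the invariance gap remains.

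\textbf{Tier (ii), soft segment embeddings.} With $h_j^{(0)} = E(x_j)+p_j+\sigma_{R_j}$, the raw inputs become separable, so I cannot invoke Lemma~\ref{lem:collision} at this tier. Instead I use two facts. First, the tag is only an additive vector that later layers mix through attention and the residual stream. Assumption~\ref{ass:non-disjoint} is stated for contextual states at control-relevant nodes, so Theorem~\ref{thm:prov-imposs} still gives $P_e^\star>0$ there. Second, and more decisively, SFC is a statement about pairs $u\equiv_{\mathrm{sem}}u'$ that \emph{both} carry the untrusted tag. The segment vector $\sigma_{\texttt{untrusted}}$ is constant across the equivalence class, so it cannot distinguish $u$ from $u^\star$. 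Lemma~\ref{lem:finite-coverage} and Step 2 of Theorem~\ref{thm:sfc-imposs} therefore go through verbatim.

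\textbf{Tier (iii), delimiters and special tokens.} These lie in $\Sigma$, so they sit in the same symbolic channel. Unless they are rendered untypeable by an external tokenizer rule, which would be an out-of-scope enforcement mechanism, the adversary can include them in $u$. The untrusted segment then contains embeddings identical to trusted delimiters, which is a direct instance of Lemma~\ref{lem:collision}. Even if they are untypeable, they reduce to Tier (ii), because they mark a boundary without restricting the attention flow out of $U$.

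\textbf{Expected obstacle.} The hardest step is Tier (ii). A hard, non-forgeable segment tag does make provenance perfectly recoverable at layer $0$. So the argument must not claim that provenance-recovery error stays positive at every layer. It must instead move the weight onto the within-class argument: knowing \emph{that} content is untrusted does not enforce invariance \emph{across} untrusted encodings. I would state this explicitly and note that a tag becomes sufficient only when it is used to impose a hard attention mask. That is the typed-attention architecture already excluded from scope.
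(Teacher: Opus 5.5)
Your argument is sound at the paper's level of rigor. Your Tiers (i) and (iii) follow the paper's Tiers 1 and 2 closely: position encodes location rather than provenance, so Theorem~\ref{thm:control-path} applies regardless; and delimiter tokens live in $\Sigma$, so they can be forged and give identical embeddings as in Lemma~\ref{lem:collision}.

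The genuine difference is your Tier (ii). The paper uses ``soft segment marker'' to mean forgeable in-band tokens, so forgeability is its only argument against segment markers. Any non-user-writable tag goes into its Tier 3, where it is said to escape the theorem, with ASIDE's first-layer transformation cited as an instance. You instead argue that a non-forgeable additive embedding $\sigma_{R_j}$ is still insufficient, and your reason uses neither forgeability nor provenance recovery. SFC compares pairs inside a single untrusted equivalence class, and the tag is constant on that class. So Lemma~\ref{lem:finite-coverage} and Step 2 of Theorem~\ref{thm:sfc-imposs} are untouched, and only a tag that drives a hard attention mask changes the picture.

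This buys you coverage of BERT-style segment embeddings and of special tokens the tokenizer will not produce from user text, which the paper's forgeability argument does not reach. It also makes explicit that even perfect provenance recovery would not deliver SFC, because the violating pair is untrusted on both sides.

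The price is that your boundary is stricter than the paper's. Your Tier (ii) reasoning applies verbatim to ASIDE's first-layer rotation. It applies equally to a tag ``preserved through all layers'' that does not gate attention. So you should either read the paper's Tier 3 as saying such tags are necessary rather than sufficient, or flag the disagreement explicitly.

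One small fix: your opening says Theorem~\ref{thm:sfc-imposs} applies ``unchanged'' in every tier. In Tier (ii), however, the raw-embedding branch of Theorem~\ref{thm:prov-imposs} fails, because the tag makes layer-$0$ provenance recoverable. State that this tier rests on Assumption~\ref{ass:non-disjoint} and the within-class argument, as your final paragraph already recognises.
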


\paragraph{Tier 1: Positional encodings encode location, not provenance.} Positional encodings
(absolute, relative, or rotary) encode where a token appears in the sequence, not whether it
originates from a trusted or untrusted source. A model can infer approximate provenance from
position (e.g., tokens in the first $|s|$ positions are likely trusted), but this inference is
not an immutable guarantee. Even perfect positional provenance does not imply
non-interference: as Theorem~\ref{thm:control-path} shows, untrusted content enters
control-relevant computation at generation positions regardless of positional labeling.

\paragraph{Tier 2: Soft segment markers are forgeable.} Special tokens such as
\texttt{[INST]}, \texttt{[/INST]}, \texttt{<<SYS>>}, and XML delimiters are elements of
$\Sigma$. They are part of the shared vocabulary and are processed by the shared embedding
function $E$. An adversary can include these markers in untrusted input, producing embeddings
identical to those of legitimate markers. This is not a hypothetical attack: ChatML injection
and similar delimiter-forgery attacks are among the most common prompt injection
techniques~\citep{greshake2023indirect}.

\paragraph{Tier 3: Immutable provenance channels escape the theorem.} If an architecture
augments each embedding with an immutable, non-user-writable provenance tag that is preserved
through all layers and cannot be modified by the attention mechanism, this constitutes the
architectural separation advocated by our result. Such a mechanism is precisely the class of
solution our theorem implies is necessary. Early instances of this approach exist:
ASIDE~\citep{zverev2025aside} gives instruction and data tokens distinct embeddings through a
transformation applied from the first layer, providing exactly the non-forgeable,
architecture-level separation our result requires.

\section{The Von Neumann Isomorphism}
\label{sec:vonneumann}

We formalize the structural analogy between prompt injection and buffer overflow. This analogy
is explanatory and motivational; it is not part of the formal proof.

\begin{center}
\begin{tabular}{@{}ll@{}}
\toprule
\textbf{Von Neumann Architecture} & \textbf{Transformer Architecture} \\
\midrule
Shared memory (code + data) & Shared embedding space (instructions + data) \\
Program counter & Attention mechanism \\
Machine instruction & System prompt token \\
User data in memory & User input embedding \\
Buffer overflow exploit & Prompt injection attack \\
Stack canary / ASLR & Guardrail classifier / perplexity filter \\
NX bit / W$\oplus$X (hardware) & Architectural separation (proposed) \\
Memory-safe languages & ``Semantically safe'' representational paradigms (open) \\
\bottomrule
\end{tabular}
\end{center}

The risks of code and data confusion in shared memory were flagged
early~\citep{anderson1972planning}, and buffer overflow exploitation was characterized over the
following two decades. No single mitigation fully solved the problem. Progress came from a
layered combination of defenses: bounds checking in source code and safer library functions (the
root-cause fix), stack canaries (introduced by StackGuard in 1998), address space layout
randomization (ASLR), non-executable memory pages (NX bit, introduced by AMD in 2003 with
widespread OS support in 2004), control-flow integrity, and ultimately the adoption of
memory-safe languages (Rust, Go, Java) that eliminate the bug class at the language level. Each
mechanism closed specific exploitation patterns---NX bit was bypassed by return-to-libc and ROP
attacks, canaries by precise memory disclosures, ASLR by information leaks---but the combination
raised the cost of exploitation substantially. The total time from discovery to mature
defense-in-depth: approximately three decades, with memory-safe languages continuing to displace
unsafe code to this day.

Prompt injection was identified circa 2022. Our theorem provides the formal justification for
why software-only defenses within the shared representational pipeline will continue to be
circumvented. The analogy suggests that a mature defense posture for prompt injection will
likely require a comparable layered approach: architectural separation of instruction and data
channels (analogous to NX bit), runtime mitigations (analogous to ASLR and stack canaries),
and eventually representational paradigms that eliminate the shared-embedding confusion at the
architectural level (analogous to memory-safe languages). The inseparability theorem identifies
the root cause; the full defense posture is a long-horizon engineering program.

\section{Empirical Grounding}
\label{sec:empirical}

While the inseparability theorem is a formal result that does not require empirical validation,
we provide three measurements that ground the theoretical quantities in real systems.

\subsection{Vocabulary Overlap (M1)}

We measured the vocabulary overlap between instruction corpora and user data corpora across three
major tokenizers. The instruction corpus was compiled from four public system prompt
repositories (TheBigPromptLibrary, jujumilk3/leaked-system-prompts, YeeKal/leaked-system-prompts,
asgeirtj/system\_prompts\_leaks). The user data corpus comprised 50{,}000 user turns streamed
from UltraChat-200K~\citep{ding2023ultrachat}. Tokenizers were loaded via HuggingFace
\texttt{AutoTokenizer} with special tokens excluded and inputs truncated at 2{,}048 tokens.
Metrics were computed over full unigram frequency distributions. We report three metrics: type
overlap $\rho_{\mathrm{type}} = |V_I \cap V_D| / |V_I \cup V_D|$ (Jaccard index),
probability-mass overlap $\rho_{\mathrm{mass}} = \sum_t \min\{p_I(t), p_D(t)\}$, and
instruction coverage $\rho_{\mathrm{cov}} = |V_I \cap V_D| / |V_I|$ (fraction of instruction
tokens also appearing in user data).

\begin{table}[ht]
\centering
\caption{Vocabulary overlap between instruction and user data corpora across three tokenizers.
$\rho_{\mathrm{cov}}$ is the metric most relevant to Lemma~\ref{lem:collision}: it measures what
fraction of the instruction vocabulary is available to an adversary.}
\label{tab:vocab-overlap}
\begin{tabular}{@{}lrrccc@{}}
\toprule
\textbf{Tokenizer} & $|V_I|$ & $|V_D|$ & $\rho_{\mathrm{type}}$ & $\rho_{\mathrm{mass}}$ & $\rho_{\mathrm{cov}}$ \\
\midrule
Mistral-7B (32K)   & 23{,}898 & 22{,}476 & 0.736 & 0.547 & 0.823 \\
Llama-3.1 (128K)   & 52{,}238 & 50{,}378 & 0.502 & 0.517 & 0.657 \\
Qwen-2.5 (152K)    & 54{,}104 & 49{,}249 & 0.474 & 0.519 & 0.615 \\
\bottomrule
\end{tabular}
\end{table}

Across all three tokenizers, a substantial majority of instruction-vocabulary tokens also appear
in user data ($\rho_{\mathrm{cov}}$ ranges from 61.5\% to 82.3\%). This confirms
Lemma~\ref{lem:collision}: the embedding sets for instructions and data overlap extensively,
not merely on a few tokens but across the majority of the instruction vocabulary. The lower
Jaccard overlap for larger-vocabulary tokenizers (Llama 3.1, Qwen 2.5) reflects vocabulary
dilution from specialized tokens (CJK characters, code tokens) that appear in neither corpus,
not a reduction in the adversary's effective attack surface.

\subsection{Representation-Level Overlap (M2)}

To ground Assumption~\ref{ass:non-disjoint}, we measured the Maximum Mean Discrepancy
(MMD) between hidden-state distributions for instruction-origin and user-origin sequences across
layers of Llama 3.1 8B Instruct (8-bit quantized). We randomly sampled $N = 300$ sequences per
corpus, extracted mean-pooled hidden states at layers 0, 8, 16, 24, and 31, and computed the
unbiased MMD$^2$ estimator with an RBF kernel using the median bandwidth heuristic. Bootstrap
95\% confidence intervals were computed with $B = 1{,}000$ resamples.

\begin{table}[ht]
\centering
\caption{MMD values by layer with bootstrap 95\% confidence intervals.}
\label{tab:mmd}
\begin{tabular}{@{}ccc@{}}
\toprule
\textbf{Layer} & \textbf{MMD} & \textbf{95\% CI} \\
\midrule
0  & 0.606 & $[0.572,\; 0.640]$ \\
8  & 0.908 & $[0.860,\; 0.953]$ \\
16 & 0.885 & $[0.846,\; 0.925]$ \\
24 & 0.805 & $[0.772,\; 0.839]$ \\
31 & 0.755 & $[0.725,\; 0.789]$ \\
\bottomrule
\end{tabular}
\end{table}

\begin{figure}[ht]
\centering
\includegraphics[width=0.75\textwidth]{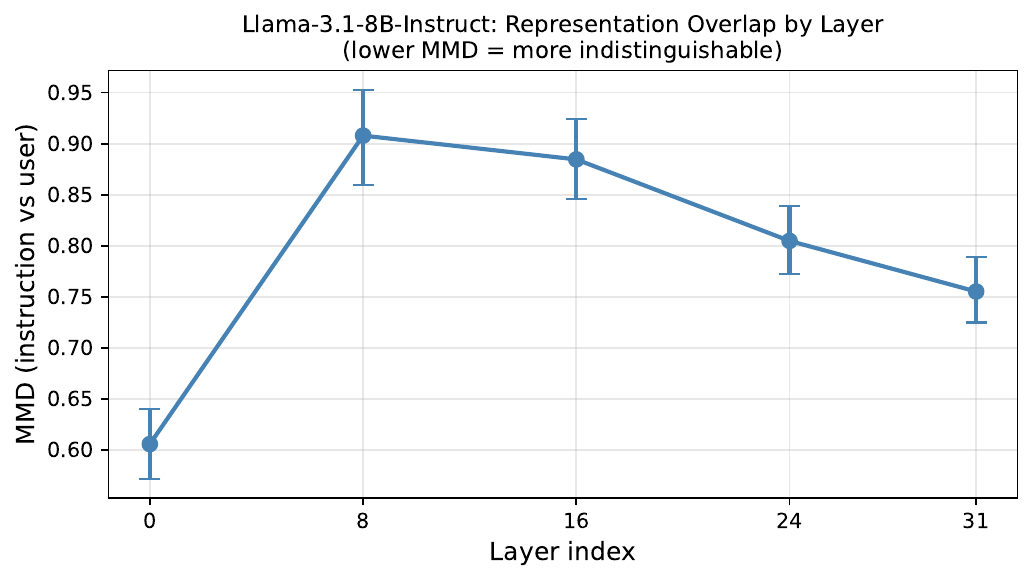}
\caption{MMD between instruction-origin and user-origin hidden-state distributions by layer in
Llama 3.1 8B Instruct. Lower MMD indicates more indistinguishable distributions. Error bars show
bootstrap 95\% confidence intervals ($B = 1{,}000$). The distributions retain substantial
overlap at every layer, consistent with Assumption~\ref{ass:non-disjoint}.}
\label{fig:mmd}
\end{figure}

The results (Figure~\ref{fig:mmd}, Table~\ref{tab:mmd}) show that at layer 0, MMD $\approx
0.61$, meaning instruction and user representations are relatively indistinguishable at the raw
embedding level. MMD peaks at layer 8 ($\approx 0.91$), indicating that the model develops
partial internal separation in early-middle layers. This separation then monotonically declines
through layers 16, 24, and 31 (MMD $\approx 0.76$), the layers most relevant to
control-authoritative output computation. At no layer do the distributions become fully
separated. This pattern is consistent with the inseparability thesis: the shared
attention-residual computation partially differentiates instruction and user content in middle
layers but cannot sustain this separation through to the control-relevant output layers.

We note that recent efficiency-driven advances in KV-cache and representation quantization, such
as TurboQuant~\citep{zandieh2026turboquant}, compress hidden states to as few as 3 bits with
near-zero accuracy loss. Because quantization is a deterministic many-to-one map, it can only
preserve or reduce the total variation distance between trusted and untrusted representation
distributions (by the data-processing inequality), never increase it. The industry trend toward
aggressive shared-space compression therefore tightens, rather than relaxes, the assumptions
underlying Assumption~\ref{ass:non-disjoint}.

\subsection{Behavioral SFC Violation (M3)}

To connect the geometric measurements to observable behavior, we tested whether untrusted
content alters control-authoritative outputs in practice. We constructed a financial services
system prompt with strict refusal rules and a baseline trusted investment query. We then prepended
the same trusted query to five prompt injection techniques: DAN-style role override, many-shot
priming, fake admin system override, Crescendo-style escalation, and virtualization/persona
hijack. All experiments used greedy decoding with a maximum of 350 new tokens.

We tested two models: Llama 3.1 8B Instruct (heavily RLHF-hardened) and Mistral 7B Instruct v0.2
(less hardened). Both models refused the harmful payloads on all five attack techniques. Neither
model complied with the injected instructions.

However, an SFC violation does not require compliance. Both models produced structurally
different outputs compared to the baseline on multiple attacks: altered response length, added
cautionary preambles, changed framing, and modified refusal language, all driven entirely by
injected tokens that added no legitimate semantic content to the trusted query.

We note an important scoping caveat. Our formal $\Pi_{\mathrm{ctrl}}$ projection covers
refusal decisions, tool authorization, policy routing, and memory-write authorization. The
output-level shifts observed here (length, preamble, framing) are not direct $\Pi_{\mathrm{ctrl}}$
violations under the strict reading of Definition~\ref{def:sfc}, since the refusal decision
itself remains constant. They are evidence of \emph{control-path exposure} in the sense of
Theorem~\ref{thm:control-path}: untrusted tokens demonstrably reach and modify control-relevant
hidden states in ways that propagate to observable output. They do not, on their own, establish
a direct refusal flip in this experiment. Direct refusal flips are documented elsewhere by GCG
and abliteration (Section~\ref{sec:bridge}); the M3 measurements complement those by showing
that even when refusal does not flip, the control path is provably contaminated.

This finding is significant for two reasons. First, it demonstrates that untrusted content
\textit{does} influence control-relevant computation even in heavily hardened models, confirming
the control-path exposure of Theorem~\ref{thm:control-path}. Second, Llama 3.1 resists known
attack signatures while showing identical M1 and M2 geometry to Mistral. This confirms that RLHF
operates as behavioral patching over an architectural vulnerability. The underlying
representational inseparability (measured by M1 and M2) is identical across both models; only
the learned behavioral response differs. This is precisely the distinction the inseparability
theorem draws: architectural properties cannot be resolved by training-time interventions.

\subsection{Weight-Surgery Confirmation: Abliteration (M4)}

A particularly striking confirmation of the inseparability thesis comes from a class of attacks
known as \textit{abliteration}~\citep{labonne2024abliteration, arditi2024refusal}. Building on
the finding that refusal behavior in aligned LLMs is mediated by a single direction in the
residual stream, abliteration identifies this direction by contrasting model activations on
harmful and harmless prompts, then projects it out of the model's weight matrices. The
reported result is a model with substantially reduced refusal capability, achieved without
retraining or input-side adversarial optimization.

The technique has been reported as effective on multiple aligned model families including Llama,
Mistral, Gemma, and Qwen, with many abliterated variants publicly released on community model
repositories. Refinements such as projected abliteration~\citep{lai2025projected} report
reduced collateral performance impact, suggesting that the safety direction can be isolated
with limited disturbance to the model's other capabilities.

We treat abliteration as empirical motivation rather than as part of our formal evidentiary
backbone, since much of the reporting appears in community blog posts and non-archival venues.
With that scoping, abliteration aligns with the inseparability thesis in three ways:

\textbf{Consistency with control-path exposure (Theorem~\ref{thm:control-path}).} If a low-rank
projection on model weights substantially reduces refusal behavior across inputs, this is
consistent with safety being encoded as a direction in the shared representational space rather
than enforced by an isolated, architecturally separated safety circuit. The refusal mechanism
appears to share the residual stream with other model behavior, which is what the control-path
exposure theorem requires.

\textbf{Consistency with finite-coverage limitations (Lemma~\ref{lem:finite-coverage}).}
Defenses against abliteration, such as extended-refusal
fine-tuning~\citep{shairah2025extended}, attempt to disperse the safety signal across multiple
latent dimensions through training-time intervention. The reported pattern is that such defenses
raise the difficulty of attack but do not eliminate the underlying vulnerability. This is
consistent with Lemma~\ref{lem:finite-coverage}: finite hardening can increase the adversary's
required effort but cannot certify invariance over all semantically irrelevant representational
variations.

\textbf{Consistency with the architectural-vs-training distinction.} Abliteration applies the
same projection technique across model families and training regimes, regardless of how much
RLHF or constitutional AI training the base model received. This is consistent with current
safety alignment operating within the shared representational pipeline, adding learned decision
boundaries rather than enforcing structural separation, which is the only kind of intervention
the Inseparability Theorem predicts is available within shared-embedding architectures.

The pattern of attack and defense around abliteration follows the broader pattern predicted by
the Von Neumann analogy: each new defense raises the adversary's cost but operates within the
same shared representational space, rather than providing architectural separation. We view
this as empirical motivation for the theorem's architectural conclusion, with the formal
evidentiary weight resting on the proof itself, the M1/M2 measurements, and the published GCG
construction.

\section{Connection to Existing Results}
\label{sec:connections}

\paragraph{Representation-level safety.} Recent work has shown that safety alignment in LLMs
is mediated by low-dimensional structures in activation space: \citet{arditi2024refusal}
demonstrated that refusal behavior is controlled by a single direction, removable via
orthogonal projection. \citet{wei2023jailbroken} analyzed the mechanisms by which safety training
fails, identifying competing objectives and mismatched generalization as root causes. These
findings are consistent with our result: if safety is encoded as a fragile, low-dimensional
feature in a shared representational space, it cannot be robustly separated from adversarial
user input that operates in the same space.

\paragraph{Adversarial robustness of aligned models.} \citet{carlini2023aligned} showed that
aligned models remain vulnerable to adversarial attacks, demonstrating a fundamental gap between
alignment training and adversarial robustness. Our theorem provides a structural explanation for
this gap: alignment training can improve average-case behavior but cannot guarantee worst-case
robustness within a shared-embedding architecture.

\paragraph{Impossibility results in AI.} \citet{brcic2023impossibility} surveyed impossibility
theorems relevant to AI, categorizing them into deduction, indistinguishability, induction,
tradeoffs, and intractability. Our result falls in the \textit{indistinguishability} category and
is, to our knowledge, the first impossibility theorem specific to prompt injection.

\paragraph{Computational indistinguishability.} Our use of ``indistinguishability'' is related
to but distinct from the cryptographic notion~\citep{goldwasser1984probabilistic}. In
cryptography, computational indistinguishability means no polynomial-time algorithm can distinguish
two distributions. In our setting, the indistinguishability is \textit{representational}: semantically
equivalent inputs of different encodings produce distinct embeddings that cannot be reliably
normalized within the shared pipeline.

\paragraph{Formalization of prompt injection.} \citet{liu2024formalizing} provided the first
formalization of prompt injection attacks and defenses at USENIX Security 2024. Their work
focuses on \textit{measuring} attack success rates. Our result complements theirs by establishing
that no defense achievable within the shared embedding framework can have zero failure rate.

\paragraph{Instruction-data separation.} \citet{zverev2025separation} introduced a formal
measure of instruction-data separation, an empirical variant computable from model outputs, and
the SEP benchmark, finding that no evaluated model achieves high separation and that prompt
engineering and fine-tuning do not substantially close the gap. Their work quantifies the
separation deficit; ours explains it. Where they measure how far real models fall short, we prove
that within shared-embedding architectures the deficit cannot be driven to zero by any
in-pipeline mechanism, so their empirical finding is what our impossibility result predicts.
Their measure is an average-case, output-level quantity, whereas our property is a worst-case
guarantee over the encoding class, which is why our result takes the form of an impossibility
rather than a benchmark, and points to the architectural direction realized by
ASIDE~\citep{zverev2025aside}.

\paragraph{Classical noninterference.} Our security property draws on the noninterference
framework of \citet{goguen1982security}, which requires that actions by one security domain do not
affect observations in another. \citet{sabelfeld2003language} generalized this to language-based
information-flow security, emphasizing that labeling data as untrusted is insufficient without
enforcement that untrusted data cannot affect privileged control flow. Our SFC property is a
specialization of this principle to transformer architectures, where the ``enforcement''
corresponds to architectural separation of control and data channels.

\paragraph{OWASP LLM Top 10.} Prompt injection has been the \#1 vulnerability in the OWASP
Top 10 for Large Language Model Applications since its inception~\citep{owasp2025}, in the
development of which one of us participated. Our theorem provides the formal justification for
why it has proven so resistant to mitigation.

\section{Discussion}
\label{sec:discussion}

\paragraph{What the theorem does and does not claim.}
We do not claim that useful models must ignore user input. We do not claim that all defenses are
equally ineffective. We claim that in shared-embedding architectures, no mechanism operating
solely within the shared representational pipeline can \textit{guarantee} that
control-authoritative behavior depends only on semantic content. This is an impossibility of
\textit{perfect} prevention, not a claim that all mitigation is futile. Defense-in-depth
remains valuable; what is ruled out is provable completeness within the current architecture.

\paragraph{Assumptions and their justification.}
The two genericity assumptions (Assumptions~\ref{ass:generic-wv} and~\ref{ass:local-readout})
rule out pathological weight configurations in which all adversarial perturbations are silently
annihilated. These assumptions are empirically validated by the success of universal adversarial
suffixes~\citep{zou2023universal} and many-shot jailbreaking~\citep{anil2024many} across all
major aligned models. A model that violated these assumptions would be one in which no
representational perturbation could ever alter control behavior. Such a model would be
functionally unresponsive to user input.

\paragraph{Practical impact.}
Organizations deploying LLM-integrated applications should treat prompt injection the way they
treat buffer overflows: assume it \textit{will} happen, implement defense-in-depth (multiple
layers of imperfect mitigations), and limit the blast radius through sandboxing and privilege
separation. No single guardrail should be trusted as a complete solution.

\section{Conclusion}
\label{sec:conclusion}

We have proven that perfect prompt injection prevention is impossible within shared-embedding
architectures that lack enforced control-data separation. The proof combines three independently
necessary results: provenance-recovery impossibility (shared representations prevent reliable
origin classification), control-path exposure (untrusted content enters control-relevant
computation via shared attention), and finite-coverage invariance gap (finite training cannot
certify encoding invariance over infinite semantic-equivalence classes).

This result places prompt injection in the same category as buffer overflows, a vulnerability
class that required decades of layered defenses (bounds checking, stack canaries, ASLR, NX bit,
control-flow integrity) and ultimately the adoption of memory-safe languages to meaningfully
contain. The path forward for prompt injection is analogous: the AI community must invest in
\textit{structural separation} of instruction and data channels at the architectural level,
complemented by runtime mitigations and, in the longer term, representational paradigms that
eliminate the shared-embedding confusion by construction. Until such architectural separation
is achieved, prompt injection will remain an irreducible risk in LLM-integrated applications
that process untrusted content through shared representational pipelines, amenable to
defense-in-depth mitigation but not to elimination.

\bibliographystyle{plainnat}

\end{document}